\pgfplotsset{compat=newest}
\newtheorem{proposition}{Proposition}
\newtheorem{corollary}{Corollary}
\newtheorem{theorem}{Theorem}
\newtheorem{definition}{Definition}
\DeclareMathOperator*{\argmax}{arg\,max}
\newcommand\TinyMatrix[1]{{%
		\small\arraycolsep=0.3\arraycolsep\ensuremath{\begin{bmatrix}#1\end{bmatrix}}}}
\newcommand{\ubar}[1]{\underaccent{\bar}{#1}}
\title{\LARGE \bf Population Games With Erlang Clocks: \\ Convergence to Nash Equilibria For Pairwise Comparison Dynamics
 }
\author{Semih Kara \and Nuno C. Martins \and Murat Arcak
\thanks{Kara and Martins' work was supported by AFOSR Grant FA9550-19-1-0315 and NSF Grant CNS-2135561. Arcak's work was supported by NSF Grant CNS-2135791.}
\thanks{Semih Kara and Nuno C. Martins are with the ECE Department, and the ISR, University of Maryland, College Park, MD 20742, USA.
        {\tt\small \{skara, nmartins\}@umd.edu.}}%
\thanks{Murat Arcak is with the EECS Department, University of California, Berkeley, CA 94720, USA.
        {\tt\small arcak@berkeley.edu.}}%
}
\begin{document}

\maketitle

\begin{abstract}
    The prevailing methodology for analyzing population games and evolutionary dynamics in the large population limit assumes that a Poisson process (or clock) inherent to each agent determines when the agent can revise its strategy. Hence, such an approach presupposes exponentially distributed inter-revision intervals, and is inadequate for cases where each strategy entails a sequence of sub-tasks (sub-strategies) that must be completed before a new revision time occurs. This article proposes a methodology for such cases under the premise that a sub-strategy's duration is exponentially-distributed, leading to Erlang distributed inter-revision intervals. We assume that a so-called pairwise-comparison protocol captures the agents' revision preferences to render our analysis concrete. The presence of sub-strategies brings on additional dynamics that is incompatible with existing models and results. Our main contributions are twofold, both derived for a deterministic approximation valid for large populations. We prove convergence of the population's state to the Nash equilibrium set when a potential game generates a payoff for the strategies. We use system-theoretic passivity to determine conditions under which this convergence is guaranteed for contractive games.
\end{abstract}

\section{Introduction}

Population games and evolutionary dynamics have been used as a tractable framework to model the strategic interactions in populations with large numbers of agents~\cite{Sandholm2010Population-Game,Park2018Payoff-Dynamic-,Park2019From-Population}. In this framework, each agent follows one strategy at a time, chosen from a finite set available to the population. At any time, each available strategy has a payoff specified by a population game. The agents repeatedly revise their strategies at the so-called revision times governed by a stochastic process (or clock) inherent to each agent. At a revision time, the agent may alter its strategy in response to the payoffs and the strategy profile of the population. When revising their strategies, the agents act according to a probabilistic heuristic specified by a so-called revision protocol, which reflects the population's decision behavior and often has a simple structure. The agents are nondescript, consequently a vector, called population state, whose entries are the proportions of the population following the available strategies suffices to represent the population's strategic profile. 

\subsection{Erlang Revision Times}
\label{subsec:Intro_Erl_Rev_Times}

Existing work assumes that the agents' inter-revision times are exponentially distributed \cite{Sandholm2010Population-Game}, including the generalization in \cite{Kara2021Pairwise-Compar-a}, which allows for strategy-dependent revision rates.

Applications where each strategy entails a sequence of tasks (sub-strategies) an agent must complete before a new revision opportunity occurs are not compatible with the existing theory because the sum of the tasks' service times is generally not exponentially distributed. In this article, our main goal is to propose a methodology to model and analyze the equilibrium stability of population games and evolutionary dynamics for such non-preemptive multi-task applications. Inspired by the queueing literature, we assume that the sub-strategies' service times are independent and exponentially distributed, resulting in Erlang distributed inter-revision times \cite[Chapter~4.2]{Queueing_Systems_Volume_I_Theory_Kleinrock}. 

The generalization that we propose is further motivated by its pertinence to traffic analysis. Congestion games, originally proposed in \cite{Studies_in_the_economics_of_transportation_Beckmann_et_al}, are well-studied examples of population games that represent the strategic environment in road networks. This, combined with the success of revision protocols in capturing myopic decision behavior, make the population games and evolutionary dynamics framework an adequate candidate for analyzing traffic in road networks. Several studies \cite{Regression-based_models_for_bus_dwell_time_Xinkai_Xiaoguang,Exploring_Travel_Time_Distribution_and_Variability_Patterns_Using_Probe_Vehicle_Data_Case_Study_in_Beijing_Chen_et_al,Review_on_Statistical_Modeling_of_Travel_Time_Variability_for_Road-Based_Public_Transport_Buchel_Francesco} find that, in many cases, Gamma and Erlang distributions are suitable representations of the time spent in traffic. Thus, noting that the Erlang and Gamma distributions coincide for certain parameter values, extending the framework to account for revision interarrival times with Erlang distributions enhances its applicability to analyzing traffic in road networks.

\subsection{Deterministic Approximation And Stability Analysis}

Results in the population games and evolutionary dynamics paradigm often seek to ascertain whether the population state converges to a neighborhood of the Nash equilibria of the population game with high probability. Similar to an extensive body of literature \cite{Weibull1995Evolutionary-ga,Sandholm2010Population-Game,Fox2013Population-Game,Park2018Payoff-Dynamic-,Arcak2020Dissipativity-T}, in this paper, we examine the aforementioned problem by employing stochastic approximation theory \cite{Kurtz1970Solutions-of-Or}, \cite[Appendix~12.B]{Sandholm2010Population-Game} and analyzing a system of deterministic differential equations. Following the nomenclature in \cite{Park2018Payoff-Dynamic-}, we refer to this system as the Erlang Evolutionary Dynamics Model.

High probability convergence of the population state to a close vicinity of the Nash equilibria of the population game can't be guaranteed under exponentially distributed inter-revision times, unless further structure is imposed on the game and the revision protocol. Pairwise comparison protocols, in combination with potential and strictly contractive\footnote{Contractive games are also known as stable games. The recent article~\cite{Arcak2020Dissipativity-T} defines weighted-contractive games, which subsume contractive ones. For simplicity, we limit our analysis to the non-weighted case.} games, are known to provide such structure \cite{Sandholm2010Pairwise-compar,Sandholm2001Potential-games,Hofbauer2009Stable-games-an}. Moreover, pairwise comparison protocols have undemanding informational requirements and allow for agents to operate in a decentralized manner \cite{Kara2021Pairwise-Compar-a,Sandholm2010Pairwise-compar}. Hence, we confine our focus to pairwise comparison protocols along with potential and strictly contractive games, and investigate whether the convergence properties of the population state persist when the inter-revision times are extended from exponential to Erlang distributions. Notably, congestion games, which is a part of the motivation in \S\ref{subsec:Intro_Erl_Rev_Times}, are potential games.



\section{Overview of the Framework}
\label{sec:Fram_Desc}

We start by presenting an overview of the population games and evolutionary dynamics paradigm.

\subsection{Agents, Strategies and the Population State}
\label{subsec:Pop_States_and_the_Soc_State}

Consider a set of $N$ agents, where $N$ is large. We refer to the collection of agents as the population.\footnote{Although our results hold in the case of multiple populations (see \cite{Sandholm2010Population-Game} for the setting with multiple populations), for ease of exposition, we assume that there is a single population.} At any time, each agent follows a single strategy from the same set of strategies $\{1,\dots,n\}$. As will be clarified throughout \S\ref{sec:Fram_Desc}, the agents are ``nondescript''; therefore, proportions of agents playing each strategy suffices to characterize the strategy profile of the population. We denote the proportion of agents playing strategy $i\in\{1,\dots,n\}$ at time $t\geq 0$ by $\bar{X}_i(t)$. Moreover, we define $\bar{X}:=\begin{bmatrix} \bar{X}_1 & \dots & \bar{X}_n \end{bmatrix}^T$, and refer to $\bar{X}$ as the population state.

\subsection{Payoffs and Population Games}
\label{subsec:Pop_Games}

At any time $t\geq 0$, each strategy $i\in\{1,\dots,n\}$ is endowed with a payoff $P_i(t)$. We assume that the mechanism assigning these payoffs is a continuously differentiable function $\mathcal{F}:\mathbb{R}^n\to\mathbb{R}^n$, called a population game, that operates on the population state $\bar{X}$. The payoff of strategy $i$ is denoted $\mathcal{F}_i$ and we write $\mathcal{F}:=\begin{bmatrix}\mathcal{F}_1 & \dots & \mathcal{F}_n \end{bmatrix}^T$. Consequently, the payoff vector at time $t$ is given by $P(t)=\mathcal{F}(\bar{X}(t))$. 

The notion of Nash equilibria is defined for population games as given below:
\begin{equation*}
    \mathbb{NE}(\mathcal{F}) := \left \{\xi\in\Delta \ \bigg | \ \xi_i>0 \implies i\in\argmax_{j\in\{1,\dots,n\}}\mathcal{F}_j(\xi) \right \},
\end{equation*}
where $\Delta:=\{\xi\in\mathbb{R}_{\geq 0}^n \ | \ \sum_{i=1}^n\xi_n=1\}$. According to~\cite[Theorem~2.1.1]{Sandholm2010Population-Game}, the set $\mathbb{NE}(\mathcal{F})$ is nonempty.

\subsection{The Revision Paradigm}
\label{subsec:Rev_Paradigm}

Agents repeatedly revise their strategies conforming to a procedure characterized by two components. The first component is the process that specifies the agents' revision times. The second component is the so-called revision protocol, which describes how a revising agent decides on its subsequent strategy.

\subsubsection{Revision Times in the Traditional Framework}
\label{subsubsec:Trad_Rev_Times}

In the traditional framework, revision times of the agents are constructed by assigning the agents independent and identically distributed (i.i.d.) Poisson processes and defining the revision times of an agent as the jump times of its process. Hence, the interarrival times of revisions of agents are i.i.d. exponential random variables \cite{Sandholm2010Population-Game}.

The recent work in \cite{Kara2021Pairwise-Compar-a} extends this construction by allowing the revision times of agents to depend on their current strategies. In this paper we abide by the strategy-independent structure of the traditional framework. However, in \S\ref{sec:Evol_Dyn}, we propose an alternative generalization of the traditional process according to which revision times are determined.

\subsubsection{Revision Protocols}
\label{subsubsec:Rev_Prots}

The revision protocol of the population is a Lipschitz continuous function $\mathcal{T}:\mathbb{R}^n\times\mathbb{R}^n\to \mathbb{R}_{\geq 0}^{n\times n}$ that satisfies $\sum_{j=1}^n \mathcal{T}_{i,j}(\xi,\pi) = \lambda$ for all $i\in\{1,\dots,n\}$, $\pi\in\mathbb{R}^n$ and $\xi\in\Delta$. Intuitively, for any $i,j\in\{1,\dots,n\}$, $\mathcal{T}_{i,j}$ gives the rate with which agents playing strategy $i$ switch to strategy $j$. An important quantity that appears in the stability analysis in \S\ref{sec:Res_Und_Str_Con_Games} is
\begin{align}
\label{eq:c}
    c:=\max_{i\in\{1,\dots,n\},~\xi\in\Delta} \sum_{j=1,~j\neq i}^n \mathcal{T}_{i,j}(\xi,\mathcal{F}(\xi)),
\end{align}
which is a measure of the maximum rate of strategy switching that omits ``self-switches''.

More precisely, the revision protocol determines the subsequent strategy of a revising agent according to the following description. Assume that an agent receives a revision opportunity at time $\bar{t}$. It follows from the definition of revision times that, with probability 1, there is a $t^*$ (strictly) between $\bar{t}$ and the previous revision time of the population. Denoting the strategy at $t^*$ of the revising agent as $i$, the probability of its subsequent strategy being $j$ is assumed to be $\mathcal{T}_{i,j}\big(\bar{X}(t^*),P(t^*)\big)/\lambda$ for any $j\in\{1,\dots,n\}$. Then, the realization of this strategy is assigned to be the revising agent's strategy at time $\bar{t}$.

\subsection{The Evolutionary Dynamics Model}
\label{subsec:EDM}

If the revision times of the agents are as constructed in \S\ref{subsubsec:Trad_Rev_Times}, payoffs are characterized by a population game, and agents decide on their subsequent strategies according to the procedure in \S\ref{subsubsec:Rev_Prots}, then the resulting population state $\bar{X}$ is a pure jump Markov process \cite[Chapter~11]{Sandholm2010Population-Game}. 

Subsequently, an important question is whether $\bar{X}$ converges to $\mathbb{NE}(\mathcal{F})$. As explained in  \cite[Section~5]{Park2018Payoff-Dynamic-} and \cite[Appendix~12.B]{Sandholm2010Population-Game}, stochastic approximation theory provides a methodology to answer this question.
Namely, provided that $N$ is large, from \cite{Kurtz1970Solutions-of-Or} and \cite[Appendix~12.B]{Sandholm2010Population-Game}, it follows that the convergence with high probability of $\bar{X}$ to a neighborhood of $\mathbb{NE}(\mathcal{F})$ can be concluded by verifying that $\mathbb{NE}(\mathcal{F})$ is globally attractive under a deterministic dynamical system. This dynamical system is referred to as the Evolutionary Dynamics Model (EDM) \cite{Park2018Payoff-Dynamic-}, where in this paper we call it the standard EDM to distinguish it from its Erlang counterpart introduced in \S\ref{sec:Evol_Dyn}.

\section{Erlang Evolutionary Dynamics}
\label{sec:Evol_Dyn}

In this section, we propose a generalization of the population games and evolutionary dynamics framework, outlined in \S\ref{sec:Fram_Desc}, by allowing inter-revision times of agents to be independent and identical Erlang random variables.

\subsection{Erlang Revision Times}
\label{subsec:Erl_Rev_Times}

To introduce Erlang distributed inter-revision times to the population games and evolutionary dynamics framework, we follow a construction similar to that in \S\ref{subsubsec:Trad_Rev_Times} and assign the agents i.i.d. arrival processes. We assume the interarrival times of an agent's process to have independent and identical Erlang distributions with rate $\lambda>0$ and parameter $m\in\mathbb{N}$. Consequently, we define the revision times of an agent to be the arrival times of its process.

If the revision times of the agents are as constructed in \S\ref{subsec:Erl_Rev_Times} (instead of the traditional construction in \S\ref{subsubsec:Trad_Rev_Times}) then the resulting population state $\bar{X}$ is a pure jump stochastic process, but not necessarily a Markov process. 
This is undesirable because, to the best of our knowledge, stochastic approximation results similar to that in \cite{Kurtz1970Solutions-of-Or} or \cite[Appendix~12.B]{Sandholm2010Population-Game} do not exist for pure jump processes with Erlang distributed waiting times. So, it is not directly evident how the deterministic approach summarized in \S\ref{subsec:EDM} can be altered to fit the framework with Erlang distributed inter-revision times. Therefore, in the following part, we present an alternative way in which the results in \cite{Kurtz1970Solutions-of-Or} and \cite[Appendix~12.B]{Sandholm2010Population-Game} can be leveraged.

\subsection{Erlang Evolutionary Dynamics}

In what follows, we first characterize the population state in terms of a pure jump Markov process that conforms to the assumptions in \cite{Kurtz1970Solutions-of-Or} and \cite[Appendix~12.B]{Sandholm2010Population-Game}. Then, we apply the aforementioned stochastic approximation results to this process and derive a deterministic dynamical system, which we utilize in the upcoming sections to ascertain the convergence properties of $\bar{X}$.

\subsubsection{Characterizing the Population State in Terms of a Markov Process}

Given a strategy $i\in\{1,\dots,n\}$, let us define $(i,j)$ to be the $j$-th sub-strategy of $i$ for $j\in\{1,\dots,m\}$. Now, consider that an agent who chooses strategy $i$ starts playing $(i,1)$. Suppose that, for any $j\in\{1,\dots,m-1\}$, after playing $(i,j)$ for a period of time the agent transitions to playing $(i,j+1)$. Let the time that the agent spends playing sub-strategy $(i,j)$, for any $j\in\{1,\dots,m\}$, be distributed exponentially with parameter $\lambda$. Furthermore, assume that the agent is given a revision opportunity after it is finished playing sub-strategy $(i,m)$ and that, when it is given an opportunity, the agent chooses its subsequent strategy according to the procedure in \S\ref{subsubsec:Rev_Prots}. Finally, assume that the times spent by agents playing the sub-strategies are i.i.d.

In the scenario described above, interarrival times of agents' revision opportunities are i.i.d. Erlang random variables with parameters $\lambda$ and $m$. As a result, the revision interarrival times constructed above and the ones constructed in \S\ref{subsec:Erl_Rev_Times} have the same joint distributions.

Let us denote the proportion of agents playing sub-strategy $(i,j)$ by $X_{i,j}$ and define
\begin{align*}
&X := \begin{bmatrix} X_{1,1} & \dots & X_{1,m} & \dots & X_{n,1} & \dots & X_{n,m} \end{bmatrix}^T.
\end{align*}
Then, $X$ is a pure jump Markov process, to which the results in \cite{Kurtz1970Solutions-of-Or} and \cite[Appendix~12.B]{Sandholm2010Population-Game} can be applied. Moreover, given any $i\in\{1,\dots,n\}$, $\sum_{j=1}^{m}X_{i,j}$ and $\bar{X}_i$ have the same distribution. Therefore, we can infer the long term behavior of $\bar{X}$ by analyzing $X$.

\subsubsection{The Deterministic Approximation}
\label{subsubsec:Erl_Evol_Dyn}

Provided that the number of agents is large, stochastic approximation theory presents a methodology to analyze the progression of $X$.

Namely, from the results in \cite{Kurtz1970Solutions-of-Or}, \cite[Section~V]{Park2018Payoff-Dynamic-} and \cite[Appendix~12.B]{Sandholm2010Population-Game} we have for any $T>0$ and $\epsilon>0$ that
\begin{align*}
    \lim_{N\to\infty}\mathbb{P}\Bigg(\sup_{t\in[0,T]}\|X(t)-x(t)\|<\epsilon \Bigg) = 1,
\end{align*}
where $$x:=\begin{bmatrix}x_{1,1} & \dots x_{1,m} & \dots & x_{n,1} & \dots & x_{n,m} \end{bmatrix}^T$$ is the solution with $x(0)=X(0)$ of the system of differential equations given for $1 \leq i \leq n$ and $2 \leq \ell \leq m$ by
\begin{subequations}
\begin{align}
    \tag{EEDMa}
    &\dot{x}_{i,1} = \sum_{j=1}^n x_{j,m} \mathcal{T}_{j,i}(\bar{x},p) - \lambda x_{i,1},\\ 
    \tag{EEDMb}
    &\dot{x}_{i,l} = \lambda (x_{i,l-1} - x_{i,l}),
\end{align} 
\end{subequations}
in which $p:=\mathcal{F}(\bar{x})$, $\bar{x} := \begin{bmatrix}\bar{x}_1 & \dots & \bar{x}_n\end{bmatrix}^T$ and
\begin{align*}
    \bar{x}_i := \sum_{l=1}^m x_{i,l}
\end{align*}
for all $i\in\{1,\dots,n\}$. We refer to $\bar{x}$ as the mean population state, $x$ as the extended mean population state, $p$ as the deterministic payoff and the dynamical system given by (EEDM) as the Erlang Evolutionary Dynamics Model (Erlang EDM). For all $t\geq 0$, we have $\bar{x}(t)\in\Delta$ and $x(t)\in\mathbb{X}$, where
\begin{align*}
\mathbb{X}:=\left\{\xi\in\mathbb{R}_{\geq 0}^{nm} \ \bigg | \ \sum_{i=1}^n\sum_{l=1}^m\xi_{i,l}=1\right\}.
\end{align*}
In the remainder of the paper, given $\xi\in\mathbb{R}^{nm}$, we denote $\xi=\begin{bmatrix} \xi_{1,1} & \dots & \xi_{1,m} & \dots & \xi_{n,1} & \dots & \xi_{n,m} \end{bmatrix}^T$ and $\bar{\xi}_i:=\sum_{l=1}^m\xi_{i,l}$ for any $i\in\{1,\dots,n\}$. 

Importantly, the discussions in~\cite[Section~V]{Park2018Payoff-Dynamic-} and~\cite[Appendix~12.B]{Sandholm2010Population-Game} indicate that, if a set $\mathbb{S}$ is globally attractive under the Erlang EDM, then the stationary distributions of $X$ concentrate near $\mathbb{S}$ as the number of agents tends to infinity. This result and the assumption that $N$ is large legitimizes the stability analysis carried out in the subsequent sections.

Note that, if $m=1$, then the Erlang EDM reduces to the standard EDM that results from the traditional framework \cite{Sandholm2010Population-Game,Park2018Payoff-Dynamic-}. This agrees with the fact that, when $m=1$, the constructions of the revision times in \S\ref{subsubsec:Trad_Rev_Times} and \S\ref{subsec:Erl_Rev_Times} coincide. 
Furthermore, the Erlang EDM conforms to the higher order evolutionary dynamics format, which requires the number of states to be greater than the number of strategies. Instances of such dynamics have been analyzed in \cite{Higher_order_game_dynamics_Laraki_Mertikopoulos,Passivity_analysis_of_higher_order_evolutionary_dynamics_and_population_games_Mabrok_Shamma}, although the results therein do not address the dynamics that we investigate in this paper.


\subsection{Prelude to Stability Analysis}

In the upcoming sections, we analyze the stability properties of the Erlang EDM. However, to have a meaningful analysis, we need further structure on the revision protocols and the game.

\subsubsection{Pairwise Comparison Protocols}
\label{subsubsec:Pair_Comp_Prots}

An important class of protocols that induce preferable stability results is the pairwise comparison class \cite{Sandholm2010Pairwise-compar}.

\begin{definition}
A protocol $\mathcal{T}$ is said to belong to the pairwise comparison (PC) class if for all $i\in\{1,\dots,n\}$, $j\in\{1,\dots,n\}\setminus\{i\}$ and $\xi,\pi\in\mathbb{R}^n$ it can be written as
\begin{align*}
    \mathcal{T}_{i,j}(\xi,\pi)=\phi_{i,j}(\pi),
\end{align*} 
where $\phi_{i,j}:\mathbb{R}^n\to\mathbb{R}_{\geq 0}$ satisfies sign preservation in the sense that $\phi_{i,j}(\pi)>0$ if $\pi_j>\pi_i$ and $\phi_{i,j}(\pi)=0$ if $\pi_j\leq\pi_i$.
\end{definition}

Essentially, an agent following a PC protocol can only switch to strategies with payoffs that are greater than the payoff of its current strategy. 
Their desirable incentive properties \cite[\S2.5]{Sandholm2010Pairwise-compar} and inherently fully decentralized operation result in the applicability of the PC class in many engineering problems. For instance, the Smith protocol \cite{Smith1984The-stability-o}, which belongs to the PC class, has been widely used to study traffic problems.

Thus, in the remainder of this paper, we consider the Erlang EDM under the assumption that $\mathcal{T}$ is a PC protocol. We refer to the resulting dynamics as the Erlang Pairwise Comparison EDM (Erlang PC-EDM).

Confining the protocol to be of the PC class readily yields a desirable characteristic. Namely, leveraging the so-called Nash stationarity of PC protocols \cite{Sandholm2010Pairwise-compar}, we identify the equilibria of the Erlang PC-EDM as
\begin{equation*}
\mathbb{ENE}(\mathcal{F}):= \left \{\xi\in\mathbb{X}\ \big | \ \bar{\xi}\in\mathbb{NE}(\mathcal{F}), \ \xi_{i,l}=\tfrac{1}{m}\bar{\xi}_i \right \},
\end{equation*}
which has a Nash-like form in the sense that $\bar{\xi}\in\mathbb{NE}(\mathcal{F})$ for all $\xi\in\mathbb{ENE}(\mathcal{F})$.

\subsubsection{Potential and Contractive Game}
\label{subsubsec:Pot_and_Con_Games}

Having confined our attention to the Erlang PC-EDM, we ask whether the mean population state converges to $\mathbb{NE}(\mathcal{F})$, i.e., whether
\begin{align*}
    \lim_{t\to\infty}\inf_{\bar{\xi}\in\mathbb{NE}(\mathcal{F})}\|\bar{x}(t)-\bar{\xi}\|=0.
\end{align*}
For the answer of this question to be affirmative, we need to assume some structure also on the game.\footnote{Assumptions on the game is necessary to guarantee stable behavior under PC protocols even in the traditional framework \cite[Chapter~9]{Sandholm2010Population-Game}.}

Two classes of games that introduce such structure are potential \cite{Monderer1996Potential-games,Sandholm2001Potential-games} and strictly contractive games \cite{Hofbauer2009Stable-games-an}.

\begin{definition} \label{def:pot_game}
A game $\mathcal{F}$ is said to be a potential game if there is a continuously differentiable function $f:\mathbb{R}^n\to\mathbb{R}$ satisfying $\nabla f = \mathcal{F}$. We refer to $f$ as the game's potential.
\end{definition}

\begin{definition} \label{def:con_game}
A game $\mathcal{F}$ is said to be contractive if $\eta^TD\mathcal{F}(\xi)\eta\leq 0$ for all $\eta\in T\Delta:=\{\nu\in\mathbb{R}^n \ | \ \sum_{i=1}^n\nu_i=0\}$ and $\xi\in\Delta$, where $D\mathcal{F}$ denotes the Jacobian of $\mathcal{F}$. Moreover, $\mathcal{F}$ is said to be strictly contractive if $\eta^TD\mathcal{F}(\xi)\eta< 0$, in which case we define 
\begin{align*}
    &\ubar{\gamma} := -\max_{\xi\in\Delta,\eta\in T\Delta}\eta^T D\mathcal{F}(\xi)\eta,\\
    &\bar{\gamma} := -\min_{\xi\in\Delta,\eta\in T\Delta}\eta^T D\mathcal{F}(\xi)\eta.
\end{align*}
\end{definition}

We note that the class of potential and strictly contractive games do not contain one another. For instance, the 123-coordination game \cite[Example~3.1.5]{Sandholm2010Population-Game} is potential, but not contractive, and the ``good'' rock-paper-scissors (RPS) game \cite[Example~3.3.2]{Sandholm2010Population-Game} is strictly contractive, but not potential.

\section{Results Under Potential Games}
\label{sec:Res_Und_Pot_Games}

In this section, we assume that $\mathcal{F}$ is potential and show that the mean population state converges to $\mathbb{NE}(\mathcal{F})$. Our analysis follows a similar approach to that in \cite{Sandholm2001Potential-games}, which proposes to investigate the potential of the game evaluated along the trajectories of the mean population state.

\begin{theorem} \label{thm:pot}
If 	$\mathcal{F}$ is a potential game, then
\begin{align*}
\lim_{t\to\infty}\inf_{\bar{\xi}\in \mathbb{NE}(\mathcal{F})} \|\bar{x}(t)-\bar{\xi}\|=0.
\end{align*}
\end{theorem}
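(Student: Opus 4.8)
The plan is to use the game's potential $f$ as a Lyapunov function evaluated along the mean population state, following the classical argument for potential games in \cite{Sandholm2001Potential-games}, but adapted to the higher-order Erlang EDM. First I would compute $\frac{d}{dt} f(\bar{x}(t)) = \nabla f(\bar{x})^T \dot{\bar{x}} = p^T \dot{\bar{x}}$, using $\nabla f = \mathcal{F}$ and $p = \mathcal{F}(\bar{x})$. From (EEDMa)--(EEDMb) one gets $\dot{\bar{x}}_i = \sum_{l=1}^m \dot{x}_{i,l} = \sum_{j=1}^n x_{j,m}\mathcal{T}_{j,i}(\bar{x},p) - \lambda x_{i,m}$ (the telescoping in (EEDMb) cancels all intermediate terms, leaving only the inflow to $(i,1)$ and the outflow from $(i,m)$). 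Since $\mathcal{T}$ is a PC protocol, $\mathcal{T}_{j,i}(\bar x,p)=\phi_{j,i}(p)$ for $j\neq i$ and $\sum_i \mathcal{T}_{j,i} = \lambda$, so after regrouping the double sum by switch pairs one obtains
\begin{align*}
\frac{d}{dt} f(\bar{x}(t)) = \sum_{i=1}^n \sum_{j\neq i} x_{j,m}\,\phi_{j,i}(p)\,(p_i - p_j).
\end{align*}

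By the sign-preservation property of the PC class, each term with $\phi_{j,i}(p)>0$ has $p_i > p_j$, hence $\phi_{j,i}(p)(p_i-p_j)\ge 0$, and since $x_{j,m}\ge 0$ on $\mathbb{X}$, this shows $\frac{d}{dt} f(\bar{x}(t)) \ge 0$: the potential is nondecreasing along trajectories. The next step is a LaSalle-type argument. Since $\bar x(t)\in\Delta$ (compact) and $x(t)\in\mathbb{X}$ (compact), $f(\bar{x}(t))$ is bounded above, so it converges, and the trajectory $x(t)$ has a nonempty $\omega$-limit set on which $\frac{d}{dt}f(\bar x)\equiv 0$. On that invariant limit set, for every pair $(j,i)$ with $j\neq i$ we must have $x_{j,m}\,\phi_{j,i}(p)\,(p_i-p_j)=0$; I would then argue that this forces $\bar x$ to be a rest point of the underlying standard pairwise-comparison dynamics, and invoke Nash stationarity of PC protocols \cite{Sandholm2010Pairwise-compar} to conclude $\bar\xi\in\mathbb{NE}(\mathcal{F})$ for every point $\xi$ in the $\omega$-limit set — equivalently the $\omega$-limit set is contained in $\mathbb{ENE}(\mathcal{F})$. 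Standard topological facts about $\omega$-limit sets of bounded trajectories then give $\lim_{t\to\infty}\inf_{\bar\xi\in\mathbb{NE}(\mathcal{F})}\|\bar x(t)-\bar\xi\|=0$.

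The main obstacle is the last implication: on the standard EDM ($m=1$), $\frac{d}{dt}f = 0$ together with sign preservation immediately yields that no agent has a profitable switch, i.e. $\bar x\in\mathbb{NE}(\mathcal{F})$. Here, because of the sub-strategy cascade, the condition only pins down $x_{i,m}$ (the proportion in the \emph{last} sub-stage), not the full distribution across stages, so one must additionally use invariance of the $\omega$-limit set under (EEDM): on an invariant set, (EEDMb) forces $\dot x_{i,l}=0$, hence $x_{i,1}=\dots=x_{i,m}$ for each $i$, so $x_{i,m}=\tfrac1m\bar x_i$; then $x_{i,m}>0 \iff \bar x_i>0$, and the vanishing condition $x_{j,m}\phi_{j,i}(p)(p_i-p_j)=0$ becomes exactly the Nash condition on $\bar x$. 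I would also need to check the minor point that $f(\bar x(t))$ genuinely converges (it is monotone and bounded, so this is immediate) and that $\omega$-limit sets are invariant under (EEDM), which holds because the right-hand side of (EEDM) is Lipschitz on the compact set $\mathbb{X}$.
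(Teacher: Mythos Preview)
Your overall strategy---use the potential as a Lyapunov function, compute $\frac{d}{dt}f(\bar x)\ge 0$ via sign preservation, then run a LaSalle argument on the compact invariant set $\mathbb{X}$---is exactly what the paper does, and your derivative computation (including the telescoping that leaves only the stage-$m$ outflow) is correct. The divergence is in how you identify the limiting set.

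The assertion ``on an invariant set, (EEDMb) forces $\dot x_{i,l}=0$, hence $x_{i,1}=\dots=x_{i,m}$'' is the gap. Invariance of a set under a flow does \emph{not} by itself force the flow to be at rest there; periodic orbits are invariant and nowhere stationary. What you actually need is two further observations. First, the vanishing condition $x_{j,m}\phi_{j,i}(p)(p_i-p_j)=0$ in fact gives $x_{j,m}\phi_{j,i}(p)=0$ for all $j\neq i$ (since $\phi_{j,i}(p)>0$ forces $p_i>p_j$), so $\dot{\bar x}=0$ and hence $p$ is constant along any trajectory in the $\omega$-limit set; the residual dynamics of $(x_{i,1},\dots,x_{i,m})$ is then the linear circulant cascade $\dot x_{i,1}=\lambda(x_{i,m}-x_{i,1})$, $\dot x_{i,l}=\lambda(x_{i,l-1}-x_{i,l})$. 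Second, this cascade has a single zero eigenvalue (along $x_{i,1}=\dots=x_{i,m}$) and all remaining eigenvalues with strictly negative real part, so every backward orbit from a non-equilibrium point is unbounded. Since the $\omega$-limit set is compact and fully invariant, this forces each of its points to satisfy $x_{i,1}=\dots=x_{i,m}$. With those two steps supplied, your argument goes through and in fact yields the stronger conclusion $x(t)\to\mathbb{ENE}(\mathcal{F})$.

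The paper avoids this backward-orbit argument. It defines $\Theta=\{\xi\in\mathbb{X}:\xi_{i,m}>0\Rightarrow i\in\arg\max_j\mathcal{F}_j(\bar\xi)\}$, notes $\dot{\bar x}=0$ on $\Theta$, and then argues forward: if $\xi\in\Theta$ with $\bar\xi\notin\mathbb{NE}(\mathcal{F})$, there is some suboptimal $i$ with $\bar\xi_i>0$ but $\xi_{i,m}=0$; since $p$ stays frozen, the cascade (EEDMb) pushes mass down the stages until $\dot x_{i,m}>0$, at which moment the trajectory exits $\Theta$. Hence the largest invariant subset of $\Theta$ is contained in $\{\xi:\bar\xi\in\mathbb{NE}(\mathcal{F})\}$. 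This is weaker than your intended conclusion but is all the theorem requires, and it sidesteps the spectral/compactness reasoning.
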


\begin{proof}
Since $\mathcal{F}$ is a potential game, it has a potential $f$ as specified in Definition~\ref{def:pot_game}. Let us define $\mathcal{L}:\mathbb{R}^n\to\mathbb{R}$ as $\mathcal{L}(\xi) = -f(\bar{\xi})$. Taking the time-derivative of $\mathcal{L}$ along the trajectories of the Erlang PC-EDM yields
\begin{align}
    &-\frac{d}{dt}f(\bar{x}) = -\sum_{i=1}^n\sum_{j=1}^n x_{i,m}\phi_{i,j}(p)(p_j-p_i) \leq 0, \label{ineq:dfdt}
\end{align}
where the inequality in \eqref{ineq:dfdt} follows from the sign-preservation property of PC protocols. Moreover, the inequality in \eqref{ineq:dfdt} holds with equality if and only if, whenever $i,j\in\{1,\dots,n\}$ and $t\geq 0$ satisfies $p_j(t)>p_i(t)$, we have $x_{i,m}(t)=0$. Noting that $\mathbb{X}$ is compact and positively invariant under the Erlang PC-EDM (see \S\ref{subsubsec:Erl_Evol_Dyn}), it follows from LaSalle's invariance principle \cite[Theorem~3.4]{Nonlinear_Systems_Khalil} that $x$ converges to the largest invariant subset of $\Theta$ defined below:
\begin{equation*}
\Theta:=\left \{\xi\in\mathbb{X} \ \bigg | \ \xi_{i,m}>0 \implies  i\in\argmax_{j\in\{1,\dots,n\}} \mathcal{F}_j(\bar{\xi}) \right \},
\end{equation*} where $i$ ranges from $1$ to $n$.
Subsequently, we show that such a largest invariant subset of $\Theta$ is in $\{\xi\in\mathbb{X} \ | \ \bar{\xi}\in\mathbb{NE}(\mathcal{F})\}$.

To begin with, notice that $\Theta\supseteq\{\xi\in\mathbb{X} \ | \ \bar{\xi}\in\mathbb{NE}(\mathcal{F})\}$. Furthermore, observe that for all $t\geq 0$ satisfying $x(t)\in\Theta$, we have $\dot{\bar{x}}(t) = 0$. This implies that $\{\xi\in\mathbb{X} \ | \ \bar{\xi}\in\mathbb{NE}(\mathcal{F})\}$ is invariant under the Erlang PC-EDM. 

Now, take $\xi\in\Theta$ such that $\bar{\xi}\notin \mathbb{NE}(\mathcal{F})$ and consider the trajectory of the Erlang PC-EDM from the initial state $\xi$. Since $\bar{\xi}\notin \mathbb{NE}(\mathcal{F})$, the set $\mathcal{I}=\{i\notin \arg\max_{j\in\{1,\dots,n\}} \mathcal{F}_j(\bar{\xi}) \ | \ \bar{\xi}_i>0\}$ is non-empty. Therefore, from (EEDMb) and the fact that $p$ is stationary at states in $\Theta$, it follows that there exist some $i\in\mathcal{I}$ and $\ubar{t}\geq 0$ such that $\dot{x}_{i,m}(\ubar{t})>0$ and $\dot{\bar{x}}(t)=0$ for all $t\in[0,\ubar{t}]$. Consequently, the trajectory leaves $\Theta$ at time $\ubar{t}$. As a result, the largest invariant subset of $\Theta$ under the Erlang PC-EDM has to be a subset of $\{\xi\in\mathbb{X}\ | \ \bar{\xi}\in\mathbb{NE}(\mathcal{F})\}$.
\end{proof}

When the game $\mathcal{F}$ has a strictly concave potential, Theorem~\ref{thm:pot} can be augmented to arrive at the following corollary.

\begin{corollary} 
\label{cor:pot}
If $\mathcal{F}$ is a potential game with a strictly concave potential, then
\begin{align*}
\lim_{t\to\infty}\inf_{\xi\in\mathbb{ENE}(\mathcal{F})}\|x(t)-\xi\|=0.
\end{align*}
\end{corollary}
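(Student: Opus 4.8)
The plan is to reuse the proof of Theorem~\ref{thm:pot} and then refine its conclusion. That argument already shows, via LaSalle's invariance principle \cite[Theorem~3.4]{Nonlinear_Systems_Khalil} applied to $\mathcal{L}(\xi)=-f(\bar\xi)$ on the compact positively invariant set $\mathbb{X}$, that $x(t)$ approaches the largest invariant subset $M$ of $\Theta$ under the Erlang PC-EDM, and that $M\subseteq\{\xi\in\mathbb{X}\mid\bar\xi\in\mathbb{NE}(\mathcal{F})\}$. Since the potential $f$ (cf.\ Definition~\ref{def:pot_game}) is strictly concave, it has a unique maximizer over the convex compact set $\Delta$, and for potential games every Nash equilibrium satisfies the first-order (KKT) conditions for maximizing $f$ over $\Delta$, hence---$f$ being concave---is a global maximizer (see \cite[Section~3.1]{Sandholm2010Population-Game}, \cite{Sandholm2001Potential-games}); thus $\mathbb{NE}(\mathcal{F})=\{\bar\xi^{*}\}$ is a singleton and $M\subseteq\{\xi\in\mathbb{X}\mid\bar\xi=\bar\xi^{*}\}$. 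It remains to show $M=\mathbb{ENE}(\mathcal{F})$.

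To this end I would make the sub-strategy dynamics on $\Theta$ explicit. By the sign-preservation property of PC protocols, exactly as in the proof of Theorem~\ref{thm:pot}, every strategy $i$ has zero net outflow $x_{i,m}\sum_{j\neq i}\phi_{i,j}(p)$ on $\Theta$; since then $\dot{\bar x}_i=\sum_{j\neq i}x_{j,m}\phi_{j,i}(p)\geq0$ while $\sum_i\dot{\bar x}_i\equiv0$, each $\dot{\bar x}_i$ and hence each net inflow also vanishes on $\Theta$. Substituting this together with $\sum_j\mathcal{T}_{i,j}=\lambda$ into (EEDMa) reduces it on $\Theta$ to $\dot x_{i,1}=\lambda(x_{i,m}-x_{i,1})$, while (EEDMb) is unchanged, so on $\Theta$ the dynamics decouple across strategies into the cyclic linear chain $\dot z_i=\lambda(P-I)z_i$, with $z_i:=(x_{i,1},\dots,x_{i,m})^{T}$ and $P$ the permutation matrix of the cyclic shift $(\zeta_1,\dots,\zeta_m)\mapsto(\zeta_m,\zeta_1,\dots,\zeta_{m-1})$. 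The matrix $\lambda(P-I)$ is a circulant with the simple eigenvalue $0$ along $\mathbf{1}=(1,\dots,1)^{T}$ and all remaining eigenvalues $\lambda\big(e^{2\pi\mathrm{i}k/m}-1\big)$, $k=1,\dots,m-1$, in the open left half-plane. Because $M\subseteq\Theta$ is invariant in both time directions and bounded, every trajectory of the Erlang PC-EDM lying in $M$ solves this linear chain for all $t\in\mathbb{R}$ and remains bounded; consequently its component in the (backward-unstable) sum of the nonzero eigenspaces must vanish, so along $M$ we have $z_i(t)\in\mathrm{span}\,\mathbf{1}$, i.e.\ $x_{i,1}=\cdots=x_{i,m}$. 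Together with $\bar\xi=\bar\xi^{*}$ this gives $x_{i,l}=\tfrac1m\bar\xi^{*}_i$ on $M$, so $M\subseteq\mathbb{ENE}(\mathcal{F})$; the reverse inclusion holds because, as noted before Theorem~\ref{thm:pot}, $\mathbb{ENE}(\mathcal{F})$ is precisely the (invariant) set of equilibria of the Erlang PC-EDM and $\mathbb{ENE}(\mathcal{F})\subseteq\Theta$. Hence $M=\mathbb{ENE}(\mathcal{F})$ and $x(t)\to\mathbb{ENE}(\mathcal{F})$.

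The step I expect to be the crux is the passage from ``$x(t)$ approaches the largest invariant set $M\subseteq\Theta$'' to ``$M=\mathbb{ENE}(\mathcal{F})$'': two-sided invariance of $M$ is essential, since a forward trajectory of the cyclic chain that starts unbalanced converges to---but is never equal to---the balanced configuration and therefore cannot itself be contained in $M$; what actually rules it out is that the nonprincipal circulant modes, though stable forward, are unstable backward and drive such a trajectory out of the compact simplex $\mathbb{X}$ as $t\to-\infty$. (An equivalent route is to note that $\sum_i\sum_l\big(x_{i,l}-\tfrac1m\bar x_i\big)^{2}$ is non-increasing along every trajectory of the chain and strictly decreasing off the balanced set, hence constant---and therefore zero---on the compact invariant set $M$.) A smaller point to handle with care is the reduction of (EEDMa) on $\Theta$: one must check that the self-switch term $x_{i,m}\mathcal{T}_{i,i}(\bar x,p)$ equals $\lambda x_{i,m}$ throughout $\Theta$, including where $x_{i,m}=0$, which follows from the vanishing of the outflow together with $\sum_j\mathcal{T}_{i,j}=\lambda$.
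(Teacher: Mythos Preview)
Your proof is correct and complete, but it follows a genuinely different path from the paper's. The paper first invokes Theorem~\ref{thm:pot} to obtain $\bar x(t)\to\bar\xi^{*}$, then applies Barbalat's lemma to conclude $\dot{\bar x}(t)\to 0$, and finally feeds this into the auxiliary linear system $\dot{\tilde x}=\lambda A\tilde x+B\dot{\bar x}$ (with $A$ Hurwitz, introduced in Appendix~\ref{app:Aux_States}) to deduce $\tilde x(t)\to 0$, i.e.\ the sub-strategy masses equalize. By contrast, you stay entirely inside the LaSalle framework: you show that on $\Theta$ the dynamics decouple into the cyclic linear chains $\dot z_i=\lambda(P-I)z_i$, and then use two-sided invariance of $M$ together with boundedness to force the backward-unstable circulant modes to vanish, which pins $M$ down to $\mathbb{ENE}(\mathcal{F})$. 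Your route avoids Barbalat's lemma and the vanishing-input ISS-type reasoning, and gives a clean structural picture of the dynamics restricted to $\Theta$; the paper's route, on the other hand, reuses the very auxiliary state $\tilde x$ and matrices $A,B$ that also drive the contractive-game analysis in \S\ref{sec:Res_Und_Str_Con_Games}, so it dovetails more economically with the rest of the paper. Your parenthetical alternative---the quadratic $\sum_{i,l}(x_{i,l}-\tfrac1m\bar x_i)^2$ being strictly decreasing off the balanced set---is essentially a Lyapunov witness for the same circulant spectrum and would work equally well.
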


\begin{proof}
If $\mathcal{F}$ has a strictly concave potential $f$, then $\mathbb{NE}(\mathcal{F})=\{\bar{\xi}^*\}$, where $\bar{\xi}^*$ is the unique maximizer of $f$ over $\Delta$ \cite[Corollary~3.1.4]{Sandholm2010Population-Game}. From Theorem~\ref{thm:pot}, it follows that $\lim_{t\to\infty}\bar{x}(t)=\bar{\xi}^*$. Moreover, $\dot{\bar{x}}$ is uniformly continuous, because $\mathcal{T}$ and $\mathcal{F}$ are Lipschitz continuous and $x$ takes values in a compact set. Hence, leveraging Barbalat's lemma, we obtain $\lim_{t\to\infty}\dot{\bar{x}}(t)= 0$.

Now, consider the dynamics \eqref{eq:tilde_x_state} of the auxiliary state $\tilde{x}$ defined in Appendix~\ref{app:Aux_States}. From $A$ (which is given in \eqref{eq:AB}) being Hurwitz and $\lim_{t\to\infty}\dot{\bar{x}}(t)= 0$, we have $\lim_{t\to\infty}\tilde{x}(t)=0$. Thus, for all $i\in\{1,\dots,n\}$ and $l\in\{1,\dots,m\}$, $\lim_{t\to\infty}|x_{i,l}(t)-x_{i,m}(t)|=0$. This and $\lim_{t\to\infty}\bar{x}(t)=\bar{\xi}^*$ imply that $\lim_{t\to\infty}\inf_{\xi\in\mathbb{ENE}(\mathcal{F})}\|x(t)-\xi\|=0.$
\end{proof}

\section{Convergence For Strictly Contractive Games}
\label{sec:Res_Und_Str_Con_Games}

In this section, we assume that $\mathcal{F}$ is strictly contractive and present a condition that ensures the convergence of the extended mean population state to $\mathbb{ENE}(\mathcal{F})$.

When the game is strictly contractive, it is known that the standard PC-EDM does not necessarily exhibit stable behavior (see \cite[Exercise~7.2.10]{Sandholm2010Population-Game}). Since the Erlang PC-EDM with $m=1$ corresponds to the standard PC-EDM \cite{Sandholm2010Pairwise-compar}, these instability results are inherited by the Erlang PC-EDM. Nonetheless, when the game is strictly contractive, stability of the standard PC-EDM can be ensured by assuming that the protocol belongs to a refinement of the PC class. This refinement requires the PC protocol to be impartial according to the definition given below.

\begin{definition} \label{def:imp}
A protocol $\mathcal{T}$ is said to be of the impartial pairwise comparison (IPC) class if for $i,j$ in $\{1,\dots,n\}$, with $i\neq j$, and $\pi\in\mathbb{R}^n$ it can be written as $\mathcal{T}_{i,j}(\xi,\pi)=\phi_j(\pi_j-\pi_i)$, where $\phi_j:\mathbb{R}\to\mathbb{R}_{\geq 0}$ is sign preserving.
\end{definition}

Consequently, in our analysis of the Erlang PC-EDM regarding strictly contractive games, we consider IPC protocols.

In addition to impartiality, given a strictly contractive $\mathcal{F}$, we use the following constant to establish the global attractivity of $\mathbb{ENE}(\mathcal{F})$ under the Erlang PC-EDM:
\begin{align}
    \ubar{\lambda} := 2c\bar{\sigma}\left( \frac{n\bar{\gamma}}{(m+1)\ubar{\gamma}}\right)^{1/2}. \label{eq:lambda_bd}
\end{align}
Here, $\bar{\gamma},\ubar{\gamma}$ are given in Definition~\ref{def:con_game}, $c$ is specified by \eqref{eq:c}, and $\bar{\sigma}:=\sup_{\omega\in[0,\infty)}\sigma_{\max}((j\omega-A)^{-1}B)$, where $\sigma_{\max}$ denotes the maximum singular value, and $A$, $B$ are given by
\begin{align}
\label{eq:AB}
    A := \TinyMatrix{
    -1 & 0 & \dots & 0 & -1\\
    1 & -1 & \dots & 0 & -1\\
    0 & 1 & \dots & 0 & -1\\
    \vdots & \vdots & \ddots & \vdots & \vdots\\
    0 & 0 & \dots & 1 & -2
	}\otimes \textbf{I}_n,~B := \textbf{e}_1\otimes \textbf{I}_n,
\end{align}
in which $\otimes$ denotes the Kronecker product, $\textbf{I}_n$ is the $n\times n$ identity matrix and $\textbf{e}_1$ is the first standard basis vector in $\mathbb{R}^{m-1}$. We note that $\bar{\sigma}$ is the $H_{\infty}$-norm of the linear system specified by $\dot{z}=Az+Bu$ (with input $u$ and output $z$). For the case when $m\leq 4$, we can compute $\bar{\sigma}$ simply as $\bar{\sigma}=((2m^2-3m+1)/(6m))^{1/2}$. As for the $m>4$ case, computation of $\bar{\sigma}$ is more challenging, yet can be done numerically via the bisection $H_\infty$-norm computation
algorithm \cite{A_bisection_method_for_computing_the_H_infty_norm_of_a_transfer_matrix_and_related_problems_Boyd_et_al}.

Having presented IPC protocols and introduced the constant $\ubar{\lambda}$, we are now ready to state the theorem which provides a condition that ensures the global attractivity of $\mathbb{ENE}(\mathcal{F})$ under the Erlang PC-EDM.

\begin{theorem} \label{thm:str_con}
If $\mathcal{F}$ is strictly contractive, the protocol is impartial and $\lambda>\ubar{\lambda}$, then 
\begin{align*}
\lim_{t\to\infty}\inf_{\xi\in\mathbb{ENE}(\mathcal{F})}\|x(t)-\xi\|=0.
\end{align*}
\end{theorem}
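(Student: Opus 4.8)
The plan is to recognize the Erlang PC-EDM as a feedback interconnection — the contractive game $\mathcal F$ in loop with impartial pairwise-comparison revision, with an extra linear block inserted between the switching flows and the mean state — and to certify convergence with a composite storage function, the bound $\lambda>\ubar{\lambda}$ being exactly what keeps it decreasing. First I would reduce to a single target: strict contractiveness forces $\mathbb{NE}(\mathcal F)$ to be a singleton $\{\bar\xi^*\}$ (the variational-inequality characterization of Nash equilibria together with $\eta^T D\mathcal F(\xi)\eta<0$ on $T\Delta$ precludes two distinct equilibria), so $\mathbb{ENE}(\mathcal F)=\{\xi^*\}$ with $\xi^*_{i,\ell}=\bar\xi^*_i/m$ and the claim becomes global attractivity of $\xi^*$. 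Writing the IPC protocol as $\mathcal T_{i,j}(\xi,\pi)=\phi_j(\pi_j-\pi_i)$ and summing (EEDMa)--(EEDMb) over $\ell$ gives $\dot{\bar x}_i=\sum_{j\neq i}\big(x_{j,m}\phi_i(p_i-p_j)-x_{i,m}\phi_j(p_j-p_i)\big)$ with $p=\mathcal F(\bar x)$ — exactly the standard impartial pairwise-comparison dynamics, except that the mass eligible to switch is $x_{\cdot,m}$ instead of $\bar x_\cdot/m$. Introducing the within-strategy imbalance coordinates $\tilde x_{i,\ell}:=x_{i,\ell}-x_{i,m}$, $1\le\ell\le m-1$ (the auxiliary state of Appendix~\ref{app:Aux_States}, obeying \eqref{eq:tilde_x_state}), a direct computation gives $\dot{\tilde x}=\lambda A\tilde x+B\dot{\bar x}$ with $A,B$ as in \eqref{eq:AB} and $A$ Hurwitz; equivalently, the map $\dot{\bar x}\mapsto\tilde x$ is a stable LTI filter of $H_\infty$-norm $\bar\sigma/\lambda$, and $x_{i,m}-\bar x_i/m$ is a fixed linear functional of $\tilde x$.

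The heart of the argument is then a dissipativity/small-gain certificate for this loop, in the spirit of the ``system-theoretic passivity'' advertised in the abstract. I would use $V=\mathcal S+\rho\,\tilde x^T P\tilde x$, where $\mathcal S$ is the integral-type storage function that certifies $\delta$-passivity of impartial pairwise-comparison dynamics (built from $\int\phi_j$, adapted so the switchable mass $x_{\cdot,m}$ appears; cf.\ \cite{Sandholm2010Pairwise-compar,Kara2021Pairwise-Compar-a}), $P\succ0$ is the bounded-real Lyapunov matrix of $(A,B)$ realizing the gain $\bar\sigma$, and $\rho>0$ a weight to be tuned. Differentiating, $\dot{\mathcal S}$ equals the usual supply rate $\langle\dot p,\dot{\bar x}\rangle$ plus correction terms generated by the $x_{\cdot,m}$-versus-$\bar x_\cdot/m$ mismatch, all linear in $\tilde x$; strict contractiveness bounds $\langle\dot p,\dot{\bar x}\rangle=\dot{\bar x}^T D\mathcal F(\bar x)\dot{\bar x}\le-\ubar{\gamma}\|\dot{\bar x}\|^2$, the correction terms and the switching flows are controlled by the rate bound $c$ and the payoff spread $\bar\gamma$, and $\tfrac{d}{dt}(\tilde x^T P\tilde x)=-\lambda\,\tilde x^T Q\tilde x+2\,\tilde x^T PB\,\dot{\bar x}$ supplies a $\lambda$-scaled dissipation in $\tilde x$. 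Collecting, $\dot V\le-\mathcal Q(\|\dot{\bar x}\|,\|\tilde x\|)$ for a quadratic form $\mathcal Q$; optimizing over $\rho$ and using the sharp $H_\infty$ characterization of $\bar\sigma$ (equivalently the Parseval bound $\int_0^T\|\tilde x\|^2\le(\bar\sigma/\lambda)^2\int_0^T\|\dot{\bar x}\|^2$ up to a decaying transient) makes $\mathcal Q$ positive definite precisely when $\lambda>2c\bar\sigma\big(n\bar\gamma/((m+1)\ubar{\gamma})\big)^{1/2}=\ubar{\lambda}$.

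With $\dot V\le0$ on the compact positively invariant $\mathbb X$ and $\mathcal F,\mathcal T$ Lipschitz, LaSalle's invariance principle \cite[Theorem~3.4]{Nonlinear_Systems_Khalil} shows $x$ converges to the largest invariant subset of $\{\dot V=0\}$; there $\dot{\bar x}=0$ and $\tilde x=0$, and Nash stationarity of pairwise-comparison dynamics forces $\bar x=\bar\xi^*$, so the set is $\{\xi^*\}=\mathbb{ENE}(\mathcal F)$, which is the claim. I expect the main obstacle to be the previous paragraph: assembling $\mathcal S$, the strict-contractiveness estimate, and the chain's bounded-real certificate into one decreasing $V$, and — above all — extracting the \emph{sharp} threshold $\ubar{\lambda}$ rather than a conservative multiple, which is where the exact gain $\bar\sigma$, the constants $c,\ubar{\gamma},\bar\gamma$, and the dimensional factors $n$ and $m+1$ must all be balanced against the $\lambda$-scaled dissipation. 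A secondary technical point is keeping the $\delta$-passivity estimate for $\mathcal S$ valid for merely Lipschitz (possibly non-differentiable) $\phi_j$, which is handled by using the integral form of $\mathcal S$, so that $\mathcal S$ is $C^1$.
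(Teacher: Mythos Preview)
Your plan is correct and matches the paper's approach almost exactly: the paper uses the same composite storage $\mathcal L_\alpha(\xi,\pi)=\sum_i\bar\xi_i\sum_j\int_0^{\pi_j-\pi_i}\phi_j(s)\,ds+\alpha\,\tilde\xi^{T}M\tilde\xi$, splits its derivative into a nonnegative Hofbauer--Sandholm term $\mathcal P$ and a cross term $\mathcal Q$ that is controlled through the $H_\infty$ gain $\bar\sigma$ of the chain $(A,B)$, and the bookkeeping of $c,\ubar\gamma,\bar\gamma,n,m$ produces precisely $\ubar\lambda$. The one technical difference is that the paper takes $M$ from the Lyapunov equation $A^{T}M+MA=-I$ rather than the bounded-real $P$, certifies the $L^2$-gain of $\dot{\bar x}\mapsto\tilde x$ via Parseval in integral form (so $\dot{\mathcal L}_\alpha$ is \emph{not} pointwise nonpositive), and then concludes $\mathcal P\to 0$ by Barbalat's lemma instead of LaSalle; your bounded-real/LaSalle variant is a legitimate and essentially equivalent route that delivers the same threshold.
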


We present a proof of Theorem~\ref{thm:str_con} in Appendix~\ref{app:Pf_of_Str_Con}, which follows mainly from the two time-scale structure of the Erlang EDM. Namely, when $\lambda$ is large in comparison to $c$, the dynamics associated with the sub-strategies gives the ``fast'' part of (EEDM), whereas the dynamics of $\bar{x}$ gives its ``slow'' part. Thus, for any $i\in\{1,\dots,n\}$, $x_{i,1},\dots,x_{i,m}$ rapidly equalize and closely track $\bar{x}_i/m$. Thereafter, (EEDM) approximates the standard EDM, and global attractivity of $\mathbb{ENE}(\mathcal{F})$ ensues from the stability properties of the standard PC-EDM \cite{Hofbauer2007Stable-games,Park2018Payoff-Dynamic-}.

Note from Theorem~\ref{thm:str_con} and the definition of $\mathbb{ENE}(\mathcal{F})$ that, if $\lambda>\ubar{\lambda}$, the game $\mathcal{F}$ is strictly contractive and the PC protocol is impartial, then the mean population state converges to $\mathbb{NE}(\mathcal{F})$.

\section{Numerical Examples}
\label{sec:Num_Ex}

We proceed to illustrate our results via two population games. In both examples, we assume that the revision protocol is the Smith protocol \cite{Smith1984The-stability-o}, meaning that the $\mathcal{T}_{i,j}(\xi,\pi)$ in (EEDMa) is given for all $\bar{\xi}\in\Delta$, $\pi\in\mathbb{R}^n$ and $i,j\in\{1,\dots,n\}$ such that $i\neq j$ by $\mathcal{T}_{i,j}(\xi,\pi) = \max(\pi_j-\pi_i,0)$.

\subsection{A Congestion Game Example}
\label{subsec:Con_Game_Ex}

A well-studied example of population games is congestion games \cite[Chapter~2.2.2]{Sandholm2010Population-Game}. We utilize the congestion game characterized by the graph in Figure~\ref{fig:con_game} to demonstrate how Theorem~\ref{thm:pot} can come into play.

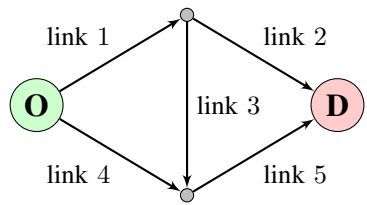
\begin{figure}[ht]
\centering
\begin{tikzpicture}[auto, node distance=7em,>=latex', scale = 0.8]
\node [circle,draw=black, fill=green!20, inner sep=0pt,minimum size=20pt] (origin) at (0,0) {\bf \large O};
\node [circle,draw=black, fill=red!20, inner sep=0pt,minimum size=20pt] (destination) at (5,0) {\bf \large D};
\node [circle,draw=black, fill=black!25, inner sep=0pt,minimum size=5pt] (vertex1) at (2.5,1.5) {};
\node [circle,draw=black, fill=black!25, inner sep=0pt,minimum size=5pt] (vertex2) at (2.5,-1.5) {};
\draw[thick,color=black, ->] (origin) --  (vertex1) node [pos=.5] {link $1$};
\draw[thick,color=black, ->] (vertex1) --  (destination) node [pos=.5] {link $2$};
\draw[thick,color=black, ->] (vertex1) --  (vertex2) node [pos=.5] {link $3$};
\draw[thick,color=black, ->] (origin) --  (vertex2) node [pos=.5, below left] {link $4$};
\draw[thick,color=black, ->] (vertex2) --  (destination) node [pos=.5, below right] {link $5$};
\end{tikzpicture}
\caption{Congestion game example with one origin/destination pair and 3 strategies.}
\label{fig:con_game}
\end{figure}

In this graph, O denotes the origin, D denotes the destination, links represent roads and arrows on links represent the direction in which an agent choosing the link travels. Agents can choose to go from the origin to the destination via one of the three available routes. Accordingly, these routes constitute the strategies available to the agents. To each link $l\in\{1,\dots,5\}$, we assign a utilization-dependent cost given by $c_l\sum_{\{i \ | \ i\in\Omega_l\}}\bar{\xi}_i$, where $\Omega_l$ denotes the set of routes in which link $l$ is used, $\bar{\xi}_i$ is the percentage of agents playing strategy $i$ and $c_l$ is a positive constant quantifying how well the link accommodates traffic. Hence, the payoffs of using the routes under the population state value $\bar{\xi}\in\Delta$ is
\begin{align*}
    \mathcal{F}^{Con}(\bar{\xi})=-
    \TinyMatrix{
    c_1+c_2 & 0 & c_1\\
    0 & c_4+c_5 & c_5\\
    c_1 & c_5 & c_1+c_3+c_5\\}
    \bar{\xi},
\end{align*}
where the route formed by links 1, 2 is defined to be the first strategy, links 4, 5 form the second strategy, and the third strategy is the route given by links 1, 3, 5. Suppose that the parameters of this congestion game is given by $c_1 =2.5$, $c_2 = 1.5$, $c_3 = 0.5$, $c_4 = 2.5$, and $c_5 = 0.7$.

Let us consider the Erlang PC-EDM induced by $\mathcal{F}^{Con}$ and the Smith revision protocol. Moreover, assume that the number of sub-strategies is $m=3$ and the revision rate is $\lambda=5$. Since congestion games are potential games \cite[Example~3.1.2]{Sandholm2010Population-Game} and the Smith protocol belongs to the PC class, we can invoke Theorem~\ref{thm:pot} to conclude that the mean population state resulting from the specified Erlang PC-EDM converges to $\mathbb{NE}(\mathcal{F}^{Con})$.

To display this outcome, we simulate the Erlang PC-EDM with the aforementioned specifications from the initial state given by $x_{1,3}(0)=x_{2,1}(0)=0.2$, $x_{3,1}(0)=0.6$ and $x_{i,l}(0)=0$ for all other $i,l\in\{1,2,3\}$. The trajectory of the mean population state acquired from this simulation is portrayed in Fig.~\ref{fig:sim_game_pot}. Observe from Fig.~\ref{fig:sim_game_pot} that the mean population state converges to the Nash equilibrium of $\mathcal{F}^{Con}$, which is the singleton at approximately $(0.349,0.513,0.137)$. Fig.~\ref{fig:sim_game_pot} also portrays the trajectory of the mean population state acquired by simulating the standard counterpart of the Erlang PC-EDM with the aforementioned specifications from the same initial state. Recall from \S\ref{subsubsec:Erl_Evol_Dyn} that the standard counterpart conforms to the traditional framework (see \S\ref{subsubsec:Trad_Rev_Times}) and is obtained by setting $m=1$ while keeping all the other parameters of the Erlang PC-EDM unchanged.

\begin{figure}[ht]
\centering
\begin{tikzpicture}[scale = 0.7]
\begin{axis}[view = {-45}{-45},
opacity=0.55,
clip=false,
3d box=complete,
grid,
grid style={dashed,gray!40},
axis line style={gray!40},
legend style={at={(axis cs:0.3,0.15,0.55)}, nodes={scale=1.3, transform shape}, opacity=1}
] 
\addplot3[thick, blue, opacity=1] table [col sep=comma, mark=none] {data/con_game_data1_3d_v2.txt};
\addplot3[thick, red, opacity=1] table [col sep=comma, mark=none] {data/con_game_data2_3d_v2.txt};
\addlegendentry{Erlang}
\addlegendentry{Standard}
\draw [black, fill=black, opacity=1] (0.2,0.2,0.6) circle (2pt);
\draw (0.2,0.2,0.6) node[font=\large,opacity=1,anchor=south west]{(0.2,0.2,0.6)};
\draw [black, fill=black, opacity=1] (0.349,0.513,0.137) circle (2pt);
\draw (0.369,0.48,-0.087) node[font=\large,opacity=1,anchor=south west]{(0.349,0.513,0.137)};
\end{axis}
\end{tikzpicture}
\caption{Mean population states from the Erlang and standard PC-EDMs induced by $\mathcal{F}^{Con}$ and the Smith protocol.}
\label{fig:sim_game_pot}
\end{figure}
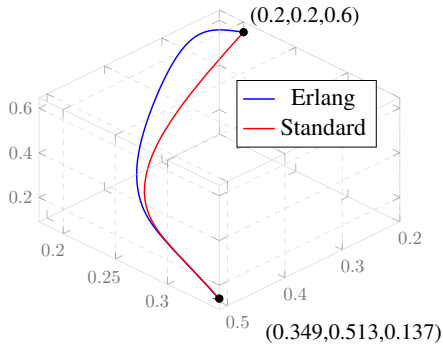

\subsection{A Rock-Paper-Scissors Game Example}

As noted in \S\ref{subsubsec:Pot_and_Con_Games}, the class of potential games and strictly contractive games do not contain one another, and the good RPS game \cite[Example~3.3.2]{Sandholm2010Population-Game} is an example of a strictly contractive game that is not potential.

Let us specify the good RPS game by
\begin{align*}
    \mathcal{F}^{RPS}(\bar{\xi})=\TinyMatrix{0 & -2 & 3\\ 3 & 0 & -2 \\ -2 & 3 & 0}\bar{\xi}
\end{align*}
for all $\bar{\xi}\in\Delta$, and consider the Erlang PC-EDM induced by the Smith protocol and $\mathcal{F}^{RPS}$. Moreover, let the number of sub-strategies be $m=4$. Since $\mathcal{F}^{RPS}$ is strictly contractive but not potential, Theorem~\ref{thm:pot} can't be utilized and we have to resort to Theorem~\ref{thm:str_con}. To apply Theorem~\ref{thm:str_con}, we compute $(\bar{\gamma},\ubar{\gamma},c)$ as $(1,1,4)$. Furthermore, as stated in \S\ref{sec:Res_Und_Str_Con_Games}, when $m\leq 4$ we have $\bar{\sigma}=((2m^2-3m-1)/(6m))^{1/2}$, meaning that for $m=4$ the value of $\bar{\sigma}$ is $0.9354$. Hence, we obtain $\ubar{\lambda}=5.7965$. Noting that the Smith protocol belongs to the IPC class, it follows from Theorem~\ref{thm:str_con} that, if $\lambda>5.7965$, then the extended mean population state resulting from the specified Erlang PC-EDM converges to $\mathbb{ENE}(\mathcal{F}^{RPS})$.

To display this outcome, we set $\lambda=5.8$ and simulate the Erlang PC-EDM with the aforementioned specifications from the initial state given by $x_{1,4}(0)=x_{2,1}(0)=0.2$, $x_{3,1}(0)=0.6$ and $x_{i,l}(0)=0$ for all other $i\in\{1,2,3\}$, $l\in\{1,2,3,4\}$. The trajectory of the mean population state acquired from this simulation, along with the trajectory from the simulation of its standard counterpart, is portrayed in Fig.~\ref{fig:sim_game_con}. We can observe from Fig.~\ref{fig:sim_game_con} that the mean population state converges to the Nash equilibrium of $\mathcal{F}^{RPS}$, which is the singleton at $(1/3,1/3,1/3)$.

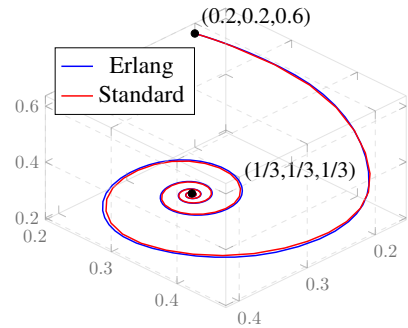
\begin{figure}[ht]
\centering
\begin{tikzpicture}[scale = 0.7]
\begin{axis}[view = {-45}{-45},
opacity=0.55,
clip=false,
3d box=complete,
grid,
grid style={dashed,gray!40},
axis line style={gray!40},
legend style={at={(axis cs:0.2,0.2,0.55)}, nodes={scale=1.3, transform shape}, opacity=1}
] 
\addplot3[thick, blue, opacity=1] table [col sep=comma, mark=none] {data/rps_data1_3d_v1.txt};
\addplot3[thick, red, opacity=1] table [col sep=comma, mark=none] {data/rps_data2_3d_v1.txt};
\addlegendentry{Erlang}
\addlegendentry{Standard}
\draw [black, fill=black, opacity=1] (0.2,0.2,0.6) circle (2pt);
\draw (0.2,0.2,0.6) node[font=\large,opacity=1,anchor=south west]{(0.2,0.2,0.6)};
\draw [black, fill=black, opacity=1] (0.33,0.33,0.33) circle (2pt);
\draw (0.4,0.33,0.43) node[font=\large,opacity=1,anchor=south west]{(1/3,1/3,1/3)};
\end{axis}
\end{tikzpicture}
\caption{Mean population states from the Erlang and standard PC-EDMs induced by $\mathcal{F}^{RPS}$ and the Smith protocol.}
\label{fig:sim_game_con}
\end{figure}

\section{Conclusion and Future Directions}
\label{sec:Conc}

In this paper, we present an extension of the population games and evolutionary dynamics paradigm by allowing agents' inter-revision times to be i.i.d. Erlang random variables. We show that the long term behavior of the population state resulting from this generalization can be inferred by analyzing, what we call, the Erlang EDM. Then, we confine our focus to PC revision protocols and consider the Erlang PC-EDM. When the game is potential, we show that the mean population state converges to $\mathbb{NE}(\mathcal{F})$ for any revision rate and number of sub-strategies. Similarly, when the game is strictly contractive, we show that $\mathbb{ENE}(\mathcal{F})$ is globally attractive under the Erlang PC-EDM provided that the protocol is impartial and $\lambda$ satisfies a bound condition.

The work presented in this paper also raises questions for future research. For instance, despite the results in \S\ref{sec:Res_Und_Str_Con_Games}, it is still unclear whether global attractivity of $\mathbb{ENE}(\mathcal{F})$ under the Erlang PC-EDM induced by an impartial protocol and strictly contractive game is guaranteed for any revision rate. Moreover, \cite{Fox2013Population-Game,Park2018Payoff-Dynamic-,Arcak2020Dissipativity-T} generalizes the class of admissible payoff mechanisms to so-called payoff dynamics models; however we only consider static games. Hence, it can be investigated whether the analysis in \S\ref{sec:Res_Und_Str_Con_Games} can be altered to fit the $\delta$-passivity \cite{Park2018Payoff-Dynamic-} or $\delta$-dissipativity \cite{Arcak2020Dissipativity-T} framework, which would broaden the global attractivity results therein to accommodate more general payoff structures.

\bibliographystyle{IEEEtran}
\bibliography{MartinsRefs,KaraRefs}

\begin{appendix}

\subsection{Auxiliary Notation and Analysis}
\label{app:Aux_States}

In this section we introduce the auxiliary states used in the proofs of Corollary~\ref{cor:pot}~and~Theorem~\ref{thm:str_con}, and characterize the dynamics of these states.

Given any $l\in\{1,\dots,m\}$, let $E_l$ be the $n\times nm$ matrix
\begin{align*}
E_l := \textbf{I}_n \otimes \textbf{e}_l^T,
\end{align*}
where $\textbf{e}_l$ is the $l$-th standard basis vector in $\mathbb{R}^m$. Moreover, given $\xi\in\mathbb{R}^{nm}$ we denote
\begin{align*}
    \tilde{\xi} := \begin{bmatrix} (E_1-E_m)\xi & \dots & (E_{m-1}-E_m)\xi \end{bmatrix}^T.
\end{align*}
Now, let us introduce the auxiliary states $y$, $\delta$ and $\tilde{x}$:
\begin{align*}
     y &:= \begin{bmatrix} y_{1,1} & \dots & y_{1,m-1} & \dots & y_{n,1} & \dots & y_{n,m-1}\end{bmatrix}^T, \\
     \delta &:= \begin{bmatrix} \delta_1^T & \dots & \delta_{m-1}^T
    \end{bmatrix}^T, \quad
    \tilde{x}:=\begin{bmatrix} \tilde{x}_1 & \dots & \tilde{x}_{m-1}\end{bmatrix}^T,
\end{align*}
where, for any $i\in\{1,\dots,n\}$ and $l\in\{1,\dots,m-1\}$ we set $\delta_l := \begin{bmatrix} y_{1,l} & \dots & y_{n,l}\end{bmatrix}^T$, $\tilde{x}_l := (E_l-E_m)x$, and define $y_{i,l}$ to be a solution of $\dot{y}_{i,l} = \sum_{j=1}^n \phi_i(p_i-p_j)x_{j,l}-\lambda x_{i,l}$.

Notice that, from the definition of $\tilde{x}$ we have $\dot{\tilde{x}}_l = (E_l-E_m)\dot{x}$. Consequently,
\begin{align}
    \dot{\tilde{x}} &= \lambda A\tilde{x}  + B\dot{\bar{x}}, \label{eq:tilde_x_state}
\end{align}
where we remind that $A$ and $B$ are given in \eqref{eq:AB}.

Moreover, let us define $\Phi:\mathbb{R}^n\to\mathbb{R}^{n\times n}$ as the matrix-valued function given for all $\pi\in\mathbb{R}^n$ by
\begin{align*}
    \Phi_{ij}(\pi) = 
    \begin{cases}
    \phi_i(\pi_i-\pi_j),\quad\text{if}~i\neq j,\\
    \sum_{j=1}^n\phi_j(\pi_j-\pi_i),\quad\text{if}~i= j.\\
    \end{cases}
\end{align*}
Then, for all $l\in\{1,\dots,m-1\}$, we can write
\begin{equation} \label{eq:del_state}
    \dot{\delta}_l = \Phi(p)E_lx - \Phi(p)E_mx 
    = \Phi(p)\tilde{x}_l. 
\end{equation}

\subsection{Proof of Theorem~\ref{thm:str_con}}
\label{app:Pf_of_Str_Con}

We proceed to present a proof of Theorem~\ref{thm:str_con} and the discussion that leads up to it. Our approach is based on analyzing the function $\mathcal{L}_{\alpha}:\mathbb{R}^n\times\mathbb{R}^n\to\mathbb{R}_{\geq 0}$ inspired by the Lyapunov function for the standard IPC-EDM \cite{Hofbauer2009Stable-games-an}. Namely, based on a modification of that in~\cite[Theorem~7.1]{Hofbauer2009Stable-games-an}, we set
\begin{equation}
    \mathcal{L}_{\alpha}(\xi,\pi) := \sum_{i=1}^n \bar{\xi}_i \sum_{j=1}^n \Psi_j(\pi_j-\pi_i)+\alpha \tilde{\xi}^TM\tilde{\xi},
\end{equation}
where $M$ is the solution of the Lyapunov equation $A^TM+MA=-I$ (since $A$ is Hurwitz, such $M$ exists and is symmetric positive-definite), $\alpha$ is a positive constant satisfying $\alpha<(m+1)\ubar{\gamma}/(2\|MB\|_2^2)$, and $\Psi:\mathbb{R}^n\to\mathbb{R}^n$ is given by
\begin{align*}
       &\Psi_j(\pi_j-\pi_i) :=\int_0^{\pi_j-\pi_i} \phi_j(s)ds.
\end{align*}

From a procedure similar to that in \cite[Appendix~A.4]{Hofbauer2009Stable-games-an}, we obtain the following time-derivative of $\mathcal{L}_\alpha$ along the trajectories of a solution of the Erlang PC-EDM and the deterministic payoff:
\begin{equation} \label{eq:dLdt}
    \tfrac{d}{dt} \mathcal{L}_{\alpha}(x,p) = -\mathcal{P}(x,p)+\mathcal{Q}(x,p),
\end{equation} 
where $\mathcal{P}(x,p)$ and $\mathcal{Q}(x,p)$ are specified as
\begin{align*}
    &\mathcal{P}(x,p) = -\alpha\lambda\tilde{x}^T\tilde{x}\\
    &\quad +\sum_{i,j=1}^n \phi_i(p_i-p_j) x_{j,m} \sum_{k=1}^n \Psi_k(p_k-p_i)-\Psi_k(p_k-p_j),\\
    &\mathcal{Q}(x,p) = \Big( m\dot{\bar{x}}^T\dot{p} + \begin{bmatrix}\dot{p}^T & \dots & \dot{p}^T\end{bmatrix}\dot{\delta}+ \alpha2\tilde{x}^TMB\dot{\bar{x}}\Big).
\end{align*}

The argument in \cite[Appendix~A.4]{Hofbauer2009Stable-games-an} can be readily adapted to prove the following proposition.

\begin{proposition} \label{prop:P}
If $\mathcal{F}$ is contractive, then for all $\xi\in\mathbb{X}$ and $\pi\in\mathbb{R}^n$ we have $\mathcal{P}(\xi,\pi) \geq 0$.
\end{proposition}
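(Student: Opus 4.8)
The plan is to obtain $\mathcal{P}(\xi,\pi)\ge 0$ directly from the closed form of $\mathcal{P}$, adapting the bookkeeping of \cite[Appendix~A.4]{Hofbauer2009Stable-games-an}. I would split $\mathcal{P}$ into its quadratic part in the discrepancy vector $\tilde{\xi}$ and its ``switching'' double sum assembled from the $\phi_i$'s and the primitives $\Psi_k$, and show each is non-negative. The quadratic part is a scalar multiple of $\tilde{\xi}^{T}\tilde{\xi}=\|\tilde{\xi}\|_2^2$ and is hence non-negative unconditionally; it uses neither contractivity nor positive-definiteness of $M$, because $M$ (via $A^{T}M+MA=-I$) has already been spent in rewriting $\tfrac{d}{dt}\big(\tilde{x}^{T}M\tilde{x}\big)$ along \eqref{eq:tilde_x_state} as that multiple of $\tilde{x}^{T}\tilde{x}$ plus the cross term $2\tilde{x}^{T}MB\dot{\bar{x}}$, which is collected into $\mathcal{Q}$. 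So the substance of the proof is the non-negativity of the switching double sum.

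I would establish that non-negativity termwise, just as for the standard impartial pairwise comparison dynamics. Each summand is a product of the non-negative mass weight $\xi_{j,m}$, the non-negative rate factor $\phi_i(\pi_i-\pi_j)$, and a difference of two values of a single $\Psi_k$ whose arguments are $\pi_k-\pi_i$ and $\pi_k-\pi_j$. By sign preservation the rate factor vanishes unless $\pi_i>\pi_j$; and when $\pi_i>\pi_j$ the two arguments are ordered ($\pi_k-\pi_i<\pi_k-\pi_j$), so monotonicity of the non-decreasing $\Psi_k$ — recall $\Psi_k(d)=\int_0^d\phi_k(s)\,ds$ with $\phi_k\ge 0$, so $\Psi_k'=\phi_k\ge 0$ and $\Psi_k\equiv 0$ on $(-\infty,0]$ — pins the $\Psi_k$-difference to the sign for which that summand is $\ge 0$. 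Summing over $i$, $j$ and $k$ gives the claim, and adding back the quadratic part yields $\mathcal{P}(\xi,\pi)\ge 0$.

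The step I expect to require the most care is not this sign argument but verifying that the Erlang dynamics leave $\mathcal{P}$ in exactly this manifestly non-negative form. In (EEDMa) mass leaves a sub-strategy $(i,m)$ at rate $\lambda\,\xi_{i,m}$, whereas in the standard impartial pairwise comparison EDM strategy $i$ sheds mass at rate $\xi_i\sum_{j\ne i}\phi_j(\pi_j-\pi_i)$; one has to use the identity $\sum_{j}\mathcal{T}_{i,j}(\bar{\xi},\pi)=\lambda$ to see that the self-revision mass $\xi_{i,m}\,\mathcal{T}_{i,i}(\bar{\xi},\pi)$ fed back into $(i,1)$ recombines with $\lambda\,\xi_{i,m}$, so that $\dot{\bar{\xi}}_i$ retains the familiar form $\sum_j\xi_{j,m}\phi_i(\pi_i-\pi_j)-\xi_{i,m}\sum_{k}\phi_k(\pi_k-\pi_i)$ and no uncontrolled residual enters $\mathcal{P}$. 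Once this is checked, contractivity of $\mathcal{F}$ plays no part in Proposition~\ref{prop:P} itself — it is needed only for the companion bound on $\mathcal{Q}$, via $\dot{p}=D\mathcal{F}(\bar{x})\dot{\bar{x}}$ together with Definition~\ref{def:con_game} — and I would carry the hypothesis through the statement merely for uniformity with Theorem~\ref{thm:str_con}.
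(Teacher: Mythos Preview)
Your plan—split $\mathcal{P}$ into its quadratic piece and the switching double sum and argue termwise via sign preservation of $\phi_i$ and monotonicity of $\Psi_k$—is exactly the adaptation of \cite[Appendix~A.4]{Hofbauer2009Stable-games-an} that the paper invokes; the paper supplies no further detail beyond that citation, and your observation that contractivity is not actually used in this half of the decomposition is correct.

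One caution on signs, though: taken literally, your argument proves the opposite inequality. As $\mathcal{P}$ is displayed in the paper, the quadratic term reads $-\alpha\lambda\,\tilde{\xi}^{T}\tilde{\xi}\le 0$ (the scalar multiple is negative, so ``hence non-negative'' does not follow), and when $\pi_i>\pi_j$ your own monotonicity observation gives $\Psi_k(\pi_k-\pi_i)-\Psi_k(\pi_k-\pi_j)\le 0$, so each summand of the double sum is $\le 0$, not $\ge 0$. This is a sign typo in the paper's formula for $\mathcal{P}$: tracing $\tfrac{d}{dt}\mathcal{L}_\alpha=-\mathcal{P}+\mathcal{Q}$ along \eqref{eq:tilde_x_state} and the computation of $\dot{\bar{x}}_i$ you outline shows that both pieces of $\mathcal{P}$ should carry the opposite sign, i.e.\ $\mathcal{P}(\xi,\pi)=\alpha\lambda\,\tilde{\xi}^{T}\tilde{\xi}+\sum_{i,j}\phi_i(\pi_i-\pi_j)\,\xi_{j,m}\sum_k\big[\Psi_k(\pi_k-\pi_j)-\Psi_k(\pi_k-\pi_i)\big]$. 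With that correction in place, your termwise argument yields $\mathcal{P}\ge 0$ exactly as intended.
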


We now focus on the $\mathcal{Q}$ term. The following proposition is a key step in proving Theorem~\ref{thm:str_con}.

\begin{proposition}  \label{prop:Q}
Assume that $\mathcal{F}$ is strictly contractive and $\lambda$ satisfies
\begin{align}
\lambda \geq \left(\frac{2\bar{\sigma}(\alpha+2\bar{\gamma}nc^2)}{(m+1)\ubar{\gamma}-2\alpha \|MB\|_2^2}\right)^{1/2}, \label{ineq:pass_param}
\end{align}
where $\bar{\gamma},\ubar{\gamma}$ are given in Definition~\ref{def:con_game}, $c$ is specified by \eqref{eq:c} and $\bar{\sigma}$ is the supremum of the maximum singular value of $((j\omega-A)^{-1}B)$ over $\omega\in[0,\infty)$. Then, the following holds for all $t\geq 0$:
\begin{align}
&\int_0^t \mathcal{Q}(x(\tau),p(\tau))d\tau \leq \left(\alpha+2\bar{\gamma}nc^2\right)\|e^{\lambda A t}\tilde{x}(0)\|_2^2. \label{ineq:Q}
\end{align}
\end{proposition}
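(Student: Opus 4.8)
The plan is to view (EEDM) as the feedback interconnection of the slow dynamics of $\bar{x}$ (equivalently of the payoff $p=\mathcal{F}(\bar{x})$) with the fast, exponentially stable linear block $\dot{\tilde{x}}=\lambda A\tilde{x}+B\dot{\bar{x}}$ from \eqref{eq:tilde_x_state}, and to prove that $\mathcal{Q}$ obeys a passivity/small-gain estimate whose effective gain shrinks as $\lambda$ grows. First I would rewrite $\mathcal{Q}$ in terms of $\dot{\bar{x}}$ and $\tilde{x}$ only, using $\dot{p}=D\mathcal{F}(\bar{x})\dot{\bar{x}}$ and the identity $\dot{\delta}_l=\Phi(p)\tilde{x}_l$ of \eqref{eq:del_state}:
\begin{equation*}
\mathcal{Q}(x,p)=m\,\dot{\bar{x}}^T D\mathcal{F}(\bar{x})\dot{\bar{x}}+\sum_{l=1}^{m-1}\dot{\bar{x}}^T D\mathcal{F}(\bar{x})^T\Phi(p)\,\tilde{x}_l+2\alpha\,\tilde{x}^T MB\dot{\bar{x}}.
\end{equation*}
The first term is the dissipation furnished by strict contractivity: since $\dot{\bar{x}}(t)\in T\Delta$, Definition~\ref{def:con_game} gives $m\dot{\bar{x}}^T D\mathcal{F}(\bar{x})\dot{\bar{x}}\leq -m\ubar{\gamma}\|\dot{\bar{x}}\|^2$, and reserving an additional $\ubar{\gamma}\|\dot{\bar{x}}\|^2$ of this dissipation for later is what accounts for the coefficient $(m+1)\ubar{\gamma}$ appearing in \eqref{ineq:pass_param}. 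The remaining two terms are ``perturbations,'' each linear in $\tilde{x}$.

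Second, I would bound the perturbation terms pointwise via Cauchy--Schwarz and Young's inequality, controlling the multipliers through the constants at hand: $\bar{\gamma}$ and the compactness of $\Delta$ bound the factors involving $D\mathcal{F}$; $\|MB\|_2$ bounds the last term; and the matrix $\Phi(\mathcal{F}(\xi))$ is bounded in norm by a constant multiple of $c$, because for an impartial protocol the $j$th off-diagonal row sum of $\mathcal{T}(\xi,\mathcal{F}(\xi))$ equals $\Phi_{jj}(\mathcal{F}(\xi))=\sum_{k\neq j}\phi_k(\mathcal{F}_k(\xi)-\mathcal{F}_j(\xi))$, which is at most $c$ by \eqref{eq:c}. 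The same observation, together with the coordinates of $x$ being nonnegative and summing to one, gives $\|\dot{\bar{x}}\|^2\leq 2nc^2$; this is the source of the constant $2\bar{\gamma}nc^2$ in \eqref{ineq:Q}. Integrating these pointwise bounds over $[0,t]$ and applying Cauchy--Schwarz in $L_2[0,t]$ reduces the whole of $\int_0^t\mathcal{Q}$ to the quantities $\|\dot{\bar{x}}\|_{L_2[0,t]}^2$, $\|\tilde{x}\|_{L_2[0,t]}^2$, plus a boundary contribution at time $t$.

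Third, I would close the loop using the fast dynamics. Variation of constants gives $\tilde{x}(\tau)=e^{\lambda A\tau}\tilde{x}(0)+\int_0^\tau e^{\lambda A(\tau-s)}B\dot{\bar{x}}(s)\,ds$. The forced component is the convolution of $\dot{\bar{x}}$ with the impulse response of $(sI-\lambda A)^{-1}B$, a stable system (as $A$ is Hurwitz) of $H_\infty$ norm $\bar{\sigma}/\lambda$; hence, by Plancherel applied after extending $\dot{\bar{x}}$ by zero past $t$ (causality), its $L_2[0,t]$ norm is at most $(\bar{\sigma}/\lambda)\|\dot{\bar{x}}\|_{L_2[0,t]}$. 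The transient component $e^{\lambda A\cdot}\tilde{x}(0)$ is controlled through the quadratic form of $M$: since $\tfrac{d}{d\tau}\big((e^{\lambda A\tau}\tilde{x}(0))^T M\,e^{\lambda A\tau}\tilde{x}(0)\big)=-\lambda\|e^{\lambda A\tau}\tilde{x}(0)\|^2$, comparing the true $\tilde{x}$-trajectory with this autonomous transient through $M$ collapses the transient contributions into a multiple of the boundary term $\|e^{\lambda A t}\tilde{x}(0)\|^2$. Substituting back, the net coefficient of $\|\dot{\bar{x}}\|_{L_2[0,t]}^2$ becomes nonpositive precisely when $\lambda$ satisfies \eqref{ineq:pass_param} -- the stated constraint $\alpha<(m+1)\ubar{\gamma}/(2\|MB\|_2^2)$ being exactly what keeps that threshold finite -- and the surviving terms sum to at most $(\alpha+2\bar{\gamma}nc^2)\|e^{\lambda A t}\tilde{x}(0)\|^2$, which is \eqref{ineq:Q}.

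The hard part will be the bookkeeping in the third step: arranging for every cross term to collapse into the single clean boundary term $\|e^{\lambda A t}\tilde{x}(0)\|^2$ rather than a scatter of $L_2$ norms of the transient, and getting the constants to match \eqref{ineq:pass_param} exactly. In particular, $2\alpha\tilde{x}^T MB\dot{\bar{x}}$ cannot be integrated by parts directly -- that would leave an uncontrollable positive term $\alpha\lambda\int_0^t\|\tilde{x}\|^2$ -- so one must keep it in Young form and absorb $\int_0^t\|\tilde{x}\|^2$ simultaneously through the $\bar{\sigma}/\lambda$ gain of the forced part and the $M$-weighted exponential decay of the transient part, all while staying inside the admissible range of $\alpha$.
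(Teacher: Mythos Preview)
Your outline is essentially the paper's proof: rewrite $\mathcal{Q}$ via $\dot p=D\mathcal{F}(\bar x)\dot{\bar x}$ and $\dot\delta_l=\Phi(p)\tilde x_l$, use strict contractivity for dissipation in $\dot{\bar x}$, bound the cross terms with Cauchy--Schwarz/Young, and close the loop with the $H_\infty$ gain $\bar\sigma/\lambda$ of the fast block $\dot{\tilde x}=\lambda A\tilde x+B\dot{\bar x}$ together with the $M$-Lyapunov decay of the transient $e^{\lambda At}\tilde x(0)$. Two of your attributions of constants are off, though, and fixing them is exactly the ``bookkeeping'' you flag at the end.

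First, the factor $(m+1)\ubar\gamma$ does not come from ``reserving an additional $\ubar\gamma\|\dot{\bar x}\|^2$'' on top of $m\ubar\gamma\|\dot{\bar x}\|^2$; you cannot manufacture extra dissipation. The paper keeps the quadratic form $-D\mathcal{F}(\bar x)$ intact when bounding the $m-1$ cross terms: since $\dot\delta_l,\dot{\bar x}\in T\Delta$, Cauchy--Schwarz in the inner product $\langle u,v\rangle:=-u^TD\mathcal{F}(\bar x)v$ gives $|\dot\delta_l^TD\mathcal{F}(\bar x)\dot{\bar x}|\le\tfrac12(-\dot\delta_l^TD\mathcal{F}\dot\delta_l)+\tfrac12(-\dot{\bar x}^TD\mathcal{F}\dot{\bar x})$. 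Each of the $m-1$ cross terms therefore consumes $\tfrac12$ of the good quadratic form, leaving $m-\tfrac{m-1}{2}=\tfrac{m+1}{2}$ copies of $\dot{\bar x}^TD\mathcal{F}\dot{\bar x}\le-\ubar\gamma\|\dot{\bar x}\|^2$; the factor $2$ in \eqref{ineq:pass_param} then appears when you clear the $\tfrac12$'s. If instead you bound the cross terms through the operator norm of $D\mathcal{F}$ (your ``$\bar\gamma$ bounds the factors involving $D\mathcal{F}$''), you will get a threshold involving $m\ubar\gamma$ against $(m-1)\bar\gamma$, which does not reproduce \eqref{ineq:pass_param}.

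Second, the constant $2\bar\gamma nc^2$ does not arise from a bound $\|\dot{\bar x}\|^2\le 2nc^2$; no pointwise bound on $\|\dot{\bar x}\|$ is used. It comes from the $\dot\delta$ side of the Cauchy--Schwarz above: $-\dot\delta_l^TD\mathcal{F}\dot\delta_l\le\bar\gamma\|\dot\delta_l\|_2^2$, and $\|\dot\delta_l\|_2^2=\|\Phi(p)\tilde x_l\|_2^2\le n\|\Phi(p)\|_1^2\|\tilde x_l\|_2^2\le 4nc^2\|\tilde x_l\|_2^2$ (the column sums of $\Phi$ are at most $2c$ for an IPC protocol). Summing over $l$ and multiplying by $\tfrac12\bar\gamma$ yields the coefficient $2\bar\gamma nc^2$ in front of $\int_0^t\|\tilde x\|_2^2$, which then combines with the $\alpha\|\tilde x\|_2^2$ from Young on $2\alpha\tilde x^TMB\dot{\bar x}$ to give the single factor $(\alpha+2\bar\gamma nc^2)$ multiplying both the transient $\|e^{\lambda At}\tilde x(0)\|_2^2$ and the $(\bar\sigma/\lambda^2)\int_0^t\|\dot{\bar x}\|_2^2$ term. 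With those two corrections your plan matches the paper's argument step for step.
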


\begin{proof}
We begin by deriving a bound on $\int_0^t\|\tilde{x}(\tau)\|^2_2d\tau$. Observe from \eqref{eq:tilde_x_state} that
\begin{align*}
    \tilde{x}(t) = e^{\lambda A t}\tilde{x}(0)+\int_0^t e^{\lambda A (t-\tau)}B\dot{\bar{x}}(\tau)d\tau.
\end{align*}
Thus, utilizing Parseval's theorem, we get
\begin{align}
    \int_{0}^{t}\|\tilde{x}(\tau)\|^2_2d\tau \leq \|e^{-\lambda A t}\tilde{x}(0)\|_2^2+\frac{\bar{\sigma}}{\lambda^2} \int_0^{t}\|\dot{\bar{x}}(\tau)\|^2_2d\tau. \label{ineq:dotxtilde}
\end{align}

We proceed by deriving a bound on $\|\dot{\delta}(t)\|_2$. Notice from \eqref{eq:del_state} that for all $l\in\{1,\dots,m-1\}$ we have
\begin{align}
    \|\dot{\delta}_l(t)\|^2_2 &\leq \|\Phi(p(t))\|^2_2\|\tilde{x}_l(t)\|^2_2 \nonumber\\
    &\leq n\|\Phi(p(t))\|^2_1\|\tilde{x}_l(t)\|^2_2 = 4nc^2\|\tilde{x}_l(t)\|^2_2, \label{ineq:dotdelta}
\end{align}
where $c = \max_{\bar{\xi}\in\Delta}\sum_{j=1}^n\phi_j(\mathcal{F}_j(\bar{\xi})-\mathcal{F}_i(\bar{\xi}))$ exists, since $\phi$ and $\mathcal{F}$ are Lipschitz continuous, and $\Delta$ is compact.

Now, we leverage \eqref{ineq:dotxtilde} and \eqref{ineq:dotdelta} to obtain a condition that guarantees \eqref{ineq:Q}. Negative definiteness of $D\mathcal{F}(\bar{x})$ with respect to $T\Delta$ implies for all $l\in\{1,\dots,m-1\}$ that
\begin{align}
\frac{1}{2}(-\dot{\delta}_l^TD\mathcal{F}(\bar{x})\dot{\delta}_l-\dot{\bar{x}}^TD\mathcal{F}(\bar{x})\dot{\bar{x}})\leq |\dot{\delta}_l^TD\mathcal{F}(\bar{x})\dot{\bar{x}}| \label{ineq:ipbound}.
\end{align}
From \eqref{ineq:ipbound}, with $2|\tilde{x}^TMB\dot{\bar{x}}|\leq \|\tilde{x}\|_2^2+\|MB\|_2^2\|\dot{\bar{x}}\|_2^2$ and negative definiteness of $D\mathcal{F}(\bar{x})$ with respect to $T\Delta$, we get
\begin{align}
& \int_0^t m\dot{\bar{x}}(\tau)^T\dot{p}(\tau) + \sum_{l=1}^{m-1}\dot{\delta}_l(\tau)^T\dot{p}(\tau)+\alpha2\tilde{x}(\tau)^TMB\dot{\bar{x}}(\tau) d\tau \nonumber\\
&\leq \int_0^t -\frac{m+1}{2}\ubar{\gamma} \|\dot{\bar{x}}(\tau)\|_2^2 + \frac{1}{2}\bar{\gamma} \|\dot{\delta}(\tau)\|_2^2 \nonumber\\
&\qquad +\alpha\|\tilde{x}(\tau)\|^2_2+\alpha\|MB\|_2^2\|\dot{\bar{x}}(\tau)\|^2_2 d\tau. \label{ineq:pass_param_simp}
\end{align}
Finally, combining \eqref{ineq:pass_param_simp}, \eqref{ineq:dotxtilde} and \eqref{ineq:dotdelta}, it follows that
\begin{align}
& \int_0^t m\dot{\bar{x}}(\tau)^T\dot{p}(\tau) + \sum_{l=1}^{m-1}\dot{\delta}_l(\tau)^T\dot{p}(\tau)+\alpha2\tilde{x}(\tau)^TMB\dot{\bar{x}}(\tau) d\tau \nonumber\\
&\leq \left(\alpha+2\bar{\gamma}nc^2\right)\|e^{\lambda A t}\tilde{x}(0)\|_2^2 +\int_0^t \Bigg(-\frac{m+1}{2}\ubar{\gamma} \nonumber\\
&\quad +\alpha\|MB\|_2^2 + (\alpha + 2\bar{\gamma}nc^2)\frac{\bar{\sigma}}{\lambda^2} \Bigg)\|\dot{\bar{x}}(\tau)\|_2^2 d\tau.
\label{ineq:pass_param_simp_fin}
\end{align}
As a result, if \eqref{ineq:pass_param} holds, then
\begin{align}
\int_0^t \mathcal{Q}(x(\tau),p(\tau))d\tau \leq \left(\alpha+2\bar{\gamma}nc^2\right)\|e^{\lambda A t}\tilde{x}(0)\|_2^2.
\end{align}
\end{proof}

Now, we are ready to present a proof of Theorem~\ref{thm:str_con}, which is a direct consequence of Propositions~\ref{prop:P}~and~\ref{prop:Q}, and Barbalat's lemma.

\begin{proof}
Assume that $\lambda>\ubar{\lambda}$, where $\ubar{\lambda}$ is specified in \eqref{eq:lambda_bd}. Then, there exists $\alpha^*>0$ satisfying $\alpha^*<(m+1)\ubar{\gamma}/(2\|MB\|_2^2)$ such that \eqref{ineq:pass_param} holds with $\alpha=\alpha^*$. Thus, we can leverage Propositions~\ref{prop:P}~and~\ref{prop:Q} to arrive at
\begin{align}
    &\int_0^t|\mathcal{P}(x(\tau),p(\tau))|d\tau \nonumber\\
    &\leq -\mathcal{L}_{\alpha^*}(x(t),p(t)) + \mathcal{L}_{\alpha^*}(x(0),p(0))\nonumber \\
    &\quad \qquad +\left(\alpha^*+2\bar{\gamma}nc^2\right)\|e^{\lambda A t}\tilde{x}(0)\|_2^2 \nonumber\\
    &\leq \mathcal{L}_{\alpha^*}(x(0),p(0)) +\left(\alpha^*+2\bar{\gamma}nc^2\right)\|e^{\lambda A t}\tilde{x}(0)\|_2^2 \label{ineq:L_bound}
\end{align}
for all $t\geq 0$. Combining \eqref{ineq:L_bound} with the fact that $A$ is Hurwitz, we get
\begin{align}
    \lim_{t\to\infty}\int_0^t|\mathcal{P}(x(\tau),p(\tau))|d\tau < \infty. \label{ineq:lim_P}
\end{align}
Since $\int_0^t|\mathcal{P}(x(\tau),p(\tau))|d\tau$ is increasing in $t$, it follows from \eqref{ineq:lim_P} that $\int_0^t|\mathcal{P}(x(\tau),p(\tau))|d\tau$ has a finite limit as $t\to\infty$.
Additionaly, $\mathcal{F}$ and $\mathcal{T}$ are Lipschitz continuous and $x$ takes values in a compact set. Therefore $x$ and $p$ are uniformly continuous, meaning that $\mathcal{P}(x,p)$ is uniformly continuous. As a result, we can invoke Barbalat's lemma to conclude that $\mathcal{P}(x(t),p(t))\to 0$ as $t\to\infty$. Finally, combining $\lim_{t\to\infty}\mathcal{P}(x(t),p(t))=0$ with $\mathcal{P}(\xi,\mathcal{F}(\bar{\xi}))=0$ if and only if $\xi\in\mathbb{ENE}(\mathcal{F})$, we get $\lim_{t\to\infty}\inf_{\xi\in\mathbb{ENE}(\mathcal{F})}\|x(t)-\xi\|=0$.
\end{proof}

\end{appendix}

\end{document}